\newcommand{\beq}{\begin{equation}}
\newcommand{\eeq}{\end{equation}}
\newcommand{\R}{\mathbb R}
\newcommand{\E}{\mathbb E}
\newcommand{\grad}{\operatorname{\nabla}}
\newcommand{\cX}{\mathcal{X}}
\newcommand{\htheta}{\hat{\theta}}
\newcommand{\hmu}{\hat{\mu}}
\newcommand{\thetad}{\dot{\theta}}
\newcommand{\mud}{\dot{\mu}}
\newcommand{\qd}{\dot{q}}
\newcommand{\qdd}{\ddot{q}}
\newcommand{\Id}{\dot{I}_1}
\newcommand{\Idd}{\ddot{I}_1}
\newcommand{\defeq}{\stackrel{\mbox{\tiny{def}}}{=}}
\newtheorem{theorem}{Theorem}
\newtheorem{corol}[theorem]{Corollary}
\newtheorem{lemma}[theorem]{Lemma}
\newtheorem{remark}{Remark}
\begin{document}

\title{SMML estimators for exponential families \\ with continuous sufficient statistics}
\author{James G. Dowty}
\date{\today}

\maketitle

\abstract{The minimum message length principle is an information theoretic criterion that links data compression with statistical inference.  This paper studies the strict minimum message length (SMML) estimator for $d$-dimensional exponential families with continuous sufficient statistics, for all $d \ge 1$.  The partition of an SMML estimator is shown to consist of convex polytopes (i.e. convex polygons when $d=2$) which can be described explicitly in terms of the assertions and coding probabilities.  While this result is known, we give a new proof based on the calculus of variations, and this approach gives some interesting new inequalities for SMML estimators.  We also use this result to construct an SMML estimator for a $2$-dimensional normal random variable with known variance and a normal prior on its mean.  }

\section{Introduction}

The minimum message length (MML) principle \cite{wallace_boulton} is an information theoretic criterion that links data compression with statistical inference \cite{wallace}.  It has a number of useful properties and it has close connections with Kolmogorov complexity \cite{wallace_dowe}.  Using the MML principle to construct estimators is known to be NP-hard in general \cite{farr} so it is common to use approximations in practice \cite{wallace}.  The term `strict minimum message length' (SMML) is used for the exact MML criterion, to distinguish it from the various approximations.

The only known algorithm for calculating an SMML estimator is Farr's algorithm \cite{farr} which applies to data taking values in a finite set which is (in some sense) $1$-dimensional.  A method for calculating SMML estimators for
$1$-dimensional exponential families with continuous sufficient statistics was also recently given in \cite{dowty}.  However, calculating SMML estimators for higher-dimensional data is still a difficult problem.

This paper studies SMML estimators for $d$-dimensional exponential families of statistical models with continuous sufficient statistics.  Section \ref{S:review} recalls the relevant definitions and fixes our notation.
Section \ref{S:deformations} shows how the expected two-part code-length $I_1$ changes as the partition is changed by a small amount, though the proof of the main technical lemma is deferred to Appendix \ref{S:proofs}.
Section \ref{S:convex} uses this calculation to prove that the partition corresponding to an SMML estimator consists of certain convex polytopes.  Section \ref{S:ineq} then proves some interesting new inequalities which follow from the requirement that the second variation of $I_1$ should be non-negative at the SMML estimator.  Section \ref{S:construction} describes then uses a general method to calculate SMML estimators and Section \ref{S:conclusion} summarizes these results.

\section{SMML estimators for exponential families}
\label{S:review}

Partly to define our notation, this section briefly recalls the relevant facts about exponential families and their SMML estimators.

Let $\cX$ and $\Theta$ be the support and natural parameter space of the exponential family (respectively) which are both are open, connected subsets of $\R^d$ with $\Theta$ convex.  For each $\theta \in \Theta$, let $f(x | \theta)$ be the probability density function (PDF) on $\cX$ given by
\begin{equation}\label{E:exp_family}
f(x | \theta) \defeq \exp(x \cdot \theta - \psi(\theta)) h(x)
\end{equation}
for any $x \in \cX$, where the dot denotes the Euclidean inner product, $h:\cX \to \R$ is a strictly positive function and $\psi:\Theta \to \R$ is determined by the condition $1 = \int_\cX f(x | \theta) dx$ for every $\theta \in \Theta$.
Let $\mu: \Theta \to \R^d$ be the function
$$\mu(\theta) \defeq \E[X \mid \theta] = \int_\cX x f(x | \theta) dx$$
which relates the natural parametrization of the exponential family to the expectation parametrization, where $\E[X \mid \theta]$ is the expectation of any random variable with PDF (\ref{E:exp_family}).  Then by a standard result for exponential families (e.g. see Theorem 2.2.1 of \cite{kass}), $\psi$ is smooth (i.e. infinitely differentiable), $\mu$ is a diffeomorphism from $\Theta$ to its image (i.e. a smooth function with a smooth inverse) and
\begin{eqnarray}
  \mu(\theta) = \, \E[X \mid \theta] &=& \grad \psi\mid _\theta \label{E:gradpsi} \\
  \mbox{Var}(X\mid\theta) &=& \mbox{Hess}(\psi) \mid_\theta \label{E:hesspsi}
\end{eqnarray}
where $\mbox{Var}(X\mid\theta)$ is the variance-covariance matrix of any random variable with PDF (\ref{E:exp_family}), $\grad \psi$ is the gradient of $\psi$ and $\mbox{Hess}(\psi)$ is the Hessian matrix of $\psi$.

Let $\pi$ be a Bayesian prior on $\Theta$ and define the {\em marginal PDF} $r$ by
$$ r(x) \defeq \int_\Theta \pi(\theta) f(x | \theta) d\theta $$
for any $x \in \cX$.  We assume $\pi$ is chosen so that the first moment of $r$ exists.

For the case considered above, an SMML estimator with $n \ge 1$ regions is defined as follows \cite{wallace}.  Suppose we are given $\htheta_1, \ldots, \htheta_n \in \Theta$ (the assertions), $q_1, \ldots, q_n \in \R$ so that $1 = q_1 + \ldots + q_n$ and each $q_i > 0$ (the coding probabilities for the assertions) and a partition $U_1, \ldots, U_n$ of $\cX$, i.e. subsets $U_1, \ldots, U_n \subseteq \cX$ so that $\cX = \cup_{i=1}^n U_i$ and $U_i \cap U_j$ is a set of Lebesgue measure $0$ for all $i \not= j$.  We also assume that each $U_i$ has non-zero measure and we will place other restrictions on each $U_i$ in Section \ref{S:deformations}.  Let $\htheta: \cX \to \Theta$ and $q:\cX \to \R$ be the functions which take the values $\htheta_i$ and $q_i$ on $U_i$ (respectively), and note that these definitions make sense except on the set of measure $0$ where two or more $U_i$ overlap.  If we discretize the data space $\cX$ to a lattice then there is a $2$-part coding of the data which has expected length
\begin{equation}\label{E:I1_defn}
I_1 = - \E[ \log(q(X) f(X | \htheta(X) ))]
\end{equation}
plus a constant which only depends on the width of the lattice, where $X$ is a random variable with PDF $r$.  Then an SMML estimator with $n$ regions is a function $\htheta(x)$ which minimizes $I_1$ out of all estimators of this form.

The following lemma is a refinement for exponential families of some well-known facts about SMML estimators.

\begin{lemma}
\label{L:SMML_basic}
If an SMML estimator has partition $U_1, \ldots, U_n$, assertions $\htheta_1, \ldots, \htheta_n$ and coding probabilities $q_1, \ldots, q_n$ then
\begin{eqnarray}
q_i &=& \int_{U_i} r(x) dx \label{E:q} \\
\htheta_i &=& \mu^{-1}\left( \frac{1}{q_i} \int_{U_i} x r(x) dx \right)  \label{E:htheta}
\end{eqnarray}
for each $i = 1, \ldots, n$.

Also, for any partition $U_1, \ldots, U_n$, not necessarily corresponding to an SMML estimator, if $q_i$ and $\htheta_i$ are as in (\ref{E:q}) and (\ref{E:htheta}) then
\begin{equation}\label{E:I1_finitesum}
I_1 = C - \sum_{i=1}^n  q_i \left( \log q_i + \htheta_i \cdot \mu(\htheta_i) - \psi(\htheta_i) \right)
\end{equation}
where $C$ is the constant $- \int_\cX r(x) \log h(x) dx$.
\end{lemma}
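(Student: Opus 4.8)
The plan is to first turn $I_1$ into an explicit finite sum over the cells of the partition, and then to carry out two decoupled optimizations. Substituting the exponential-family form (\ref{E:exp_family}) into (\ref{E:I1_defn}) and using that $\htheta(x)=\htheta_i$ and $q(x)=q_i$ on $U_i$ up to a set of measure zero, I would write
\[
I_1 \;=\; -\sum_{i=1}^n \int_{U_i} r(x)\bigl( \log q_i + x\cdot\htheta_i - \psi(\htheta_i) + \log h(x)\bigr)\,dx .
\]
Pulling the constants out of each integral and writing $Q_i \defeq \int_{U_i} r(x)\,dx$ and $m_i \defeq \int_{U_i} x\, r(x)\,dx$ (both finite, the latter because $r$ has a first moment), this becomes
\[
I_1 \;=\; C - \sum_{i=1}^n \Bigl( Q_i\log q_i + \htheta_i\cdot m_i - Q_i\,\psi(\htheta_i)\Bigr), \qquad C \defeq -\int_\cX r(x)\log h(x)\,dx ,
\]
an identity valid for \emph{any} partition and any assertions $\htheta_i\in\Theta$ and coding probabilities $q_i>0$ with $\sum_i q_i=1$. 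The last part of the lemma is then immediate: if $q_i$ and $\htheta_i$ are given by (\ref{E:q}) and (\ref{E:htheta}), then $q_i=Q_i$ and $\mu(\htheta_i)=m_i/Q_i$, so $m_i=q_i\,\mu(\htheta_i)$, and substituting yields (\ref{E:I1_finitesum}).

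For (\ref{E:q}) and (\ref{E:htheta}), I would fix the partition (hence $Q_i$ and $m_i$) and minimize $I_1$; since the coding probabilities and the assertions appear in disjoint groups of terms, this amounts to separately maximizing $\sum_i Q_i\log q_i$ over the open simplex and each $\htheta_i\cdot m_i - Q_i\psi(\htheta_i)$ over $\Theta$. Each $Q_i>0$ because $r$ is a mixture of the strictly positive densities $f(x\,|\,\theta)$ and $U_i$ has positive measure, and $\sum_i Q_i=\int_\cX r=1$, so $Q=(Q_1,\dots,Q_n)$ and $q=(q_1,\dots,q_n)$ are probability vectors; Gibbs' inequality (equivalently nonnegativity of the Kullback--Leibler divergence, via $\log t\le t-1$) then gives $\sum_i Q_i\log q_i\le\sum_i Q_i\log Q_i$ with equality iff $q=Q$, which is (\ref{E:q}). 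For the assertions, the function $\htheta_i\mapsto \htheta_i\cdot m_i - Q_i\psi(\htheta_i)$ is concave on the convex set $\Theta$, because $Q_i>0$ and $\psi$ is convex (its Hessian is a variance-covariance matrix by (\ref{E:hesspsi})); its gradient is $m_i - Q_i\grad\psi|_{\htheta_i} = m_i - Q_i\mu(\htheta_i)$ by (\ref{E:gradpsi}), which vanishes exactly when $\mu(\htheta_i)=m_i/Q_i$, and since $\mu$ is a diffeomorphism onto its image this gives (\ref{E:htheta}).

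The one point that genuinely needs care is that a maximizer $\htheta_i\in\Theta$ of this concave function exists, equivalently that the weighted mean $m_i/Q_i$ lies in the image $\mu(\Theta)$ (this is also tacitly required for (\ref{E:htheta}) to make sense). I would dispose of this by using the standing hypothesis that the estimator is SMML: a minimizing $\htheta_i$ then exists by assumption, it must be a critical point of the concave function because $\Theta$ is open, and so it satisfies $\mu(\htheta_i)=m_i/Q_i$; in particular $m_i/Q_i\in\mu(\Theta)$. (Alternatively one can argue directly that $m_i/Q_i$, being a weighted average of points of $\cX$ against a density positive throughout $U_i$, lies in the interior of the convex hull of $\cX$, which for a regular exponential family coincides with $\mu(\Theta)$.) Beyond this, the argument is just bookkeeping with integrals; the only real inputs are the exponential-family identities (\ref{E:gradpsi})--(\ref{E:hesspsi}), convexity of $\psi$, and Gibbs' inequality.
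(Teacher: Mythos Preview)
Your argument is correct and structurally parallel to the paper's: both expand $I_1$ via (\ref{E:exp_family}) into a sum over cells, then optimize over $q$ and over each $\htheta_i$ with the partition held fixed, and finally substitute back to obtain (\ref{E:I1_finitesum}). The differences are in the optimization tools. For the coding probabilities the paper uses Lagrange multipliers to locate a critical point of $\sum_i Q_i\log q_i$ on the simplex, whereas you invoke Gibbs' inequality to obtain the global maximum directly; your route is slightly cleaner since it avoids any second-order check. For the assertions both proofs set the gradient to zero and appeal to (\ref{E:gradpsi}), but you make the concavity of $\htheta_i\mapsto \htheta_i\cdot m_i - Q_i\psi(\htheta_i)$ explicit via (\ref{E:hesspsi}), again turning a first-order condition into a genuine global statement. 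You also address a point the paper leaves implicit, namely that $m_i/Q_i$ must lie in $\mu(\Theta)$ for (\ref{E:htheta}) to make sense; your resolution (the SMML minimizer exists by hypothesis, $\Theta$ is open, hence the interior critical-point equation holds) is exactly right and matches how the paper tacitly uses the global-minimum assumption.
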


\begin{proof}
From (\ref{E:I1_defn}) we have
\begin{eqnarray}
I_1 &=&  - \int_\cX r(x) \log(q(x) f(x| \htheta(x) )) dx \nonumber \\
&=& - \sum_{i=1}^n  \int_{U_i} r(x) \log(q(x) f(x| \htheta(x) )) dx  \nonumber \\
&=& - \sum_{i=1}^n  \int_{U_i} r(x) \log(q_i f(x| \htheta_i )) dx  \nonumber \\
&=& C - \sum_{i=1}^n  \int_{U_i} r(x) \left( \log q_i +  x \cdot \htheta_i - \psi(\htheta_i) \right) dx \mbox{ by (\ref{E:exp_family})} \nonumber \\
&=& C - \sum_{i=1}^n  \left[ \left(\log q_i - \psi(\htheta_i)\right)\int_{U_i} r(x) dx
+ \htheta_i \cdot \int_{U_i} x r(x) dx  \right].  \label{E:I1_partial_sum}
\end{eqnarray}

Now, assume $U_1, \ldots, U_n$, $\htheta_1, \ldots, \htheta_n$ and $q_1, \ldots, q_n$ correspond to an SMML estimator, i.e. they represent a global minimum of $I_1$.  Then it is not possible to reduce $I_1$ by changing $q_1, \ldots, q_n$ so that $1 = q_1 + \ldots + q_n$ and with $U_1, \ldots, U_n$ and $\htheta_1, \ldots, \htheta_n$ fixed.  So by the method of Lagrange multipliers, at the SMML estimator the gradient of (\ref{E:I1_partial_sum}) (with only $q_1, \ldots, q_n$ varying) should be proportional to the gradient of the constraint function $q_1 + \ldots + q_n - 1$, i.e. there is some $\lambda \in \R$ so that for all $i = 1, \ldots, n$,
$$ \lambda = \lambda \frac{\partial \,}{\partial q_i} (q_1 + \ldots + q_n - 1)
= \frac{\partial I_1}{\partial q_i} = -\frac{1}{q_i} \int_{U_i} r(x) dx,$$
where the last step is by (\ref{E:I1_partial_sum}), so this and the condition $1 = q_1 + \ldots + q_n$ imply (\ref{E:q}).  Similarly, $I_1$ cannot be reduced by changing $\htheta_i$ while keeping $U_1, \ldots, U_n$, $\htheta_1, \ldots,\htheta_{i-1},\htheta_{i+1},\ldots, \htheta_n$ and $q_1, \ldots, q_n$ fixed. So if $\htheta_i \in \Theta \subseteq \R^d$ has co-ordinates $\htheta_i = (\htheta_{i1}, \ldots, \htheta_{id})$ then by (\ref{E:q}) and (\ref{E:I1_partial_sum}), for every $j = 1, \ldots, d$,
$$ 0 = \frac{\partial I_1}{\partial \htheta_{ij}}
= q_i \frac{\partial \,}{\partial \htheta_{ij}} \psi(\htheta_i) - \int_{U_i} x_j r(x) dx $$
so (\ref{E:htheta}) follows from (\ref{E:gradpsi}).  Lastly, (\ref{E:I1_finitesum}) follows from (\ref{E:q}), (\ref{E:htheta}) and (\ref{E:I1_partial_sum}).
\end{proof}

\section{Deformations of the partition}
\label{S:deformations}

By Lemma \ref{L:SMML_basic}, finding an SMML estimator is equivalent to finding a partition of $\cX$ which minimizes (\ref{E:I1_finitesum}) when $q_i$ and $\htheta_i$ are as in (\ref{E:q}) and (\ref{E:htheta}).  In this section, we consider an estimator defined by a partition $U_1, \ldots, U_n$ and we calculate how $I_1$ varies as we change the partition by a small amount.  This is interesting because, to first order, $I_1$ should not change under any small deformation when $U_1, \ldots, U_n$ corresponds to an SMML estimator.

We now place some fairly mild restrictions on the partitions that we consider by assuming each $U_i$ is a (not necessarily connected) $d$-manifold in $\R^d$ with a piecewise smooth boundary $\partial U_i$ (see \S3.1 of \cite{agricola} for a general description of manifolds in $\R^d$).
This means that each $U_i$ is the solid region in $\R^d$ bounded by a $(d-1)$-dimensional set $\partial U_i$ which locally has the same shape as the graph of some smooth real-valued function defined on a small ball in $\R^d$, except on a $(d-2)$-dimensional set where $\partial U_i$ is allowed to have `ridges' or `corners' like those that can occur in the graph of the minimum of a finite number of smooth (and transverse) functions.  We therefore allow each $U_i$ to have a very wide range of topologies and geometries but we do not consider partitions with fractal boundaries, for instance.
Since we have already assumed that any two regions $U_i$ and $U_j$ overlap in a set of measure $0$, we require that the interiors of $U_i$ and $U_j$ are disjoint and hence that $U_i \cap U_j \subseteq \partial U_i \cap \partial U_j$.

Now, suppose that $U_1$ and $U_2$ share a `face', i.e. that $\partial U_1 \cap \partial U_2$ contains a smooth, $(d-1)$-dimensional, curvilinear disc $D$.  We will deform the partition $U_1, \ldots, U_n$ by perturbing $D$ slightly.

Let $N$ be the unit normal vector field on $D$ which points out of $U_1$ and into $U_2$, and extend $N$ in any way to a smooth vector field defined on all of $\R^d$.  Let $g : \R^d \to \R$ be any function so that $g(x)=0$ except perhaps in a closed and bounded subset $\mbox{Supp}(g)$ of $\R^d$ (the support of $g$) which is contained in $U_1 \cup U_2$ and
which only meets $\partial U_1 \cup \partial U_2$ in a subset of $D$.

For all real $t$ close to $0$, let $F_t: \R^d \to \R^d$ be the flow of the vector field $g N$, i.e. for given $x \in \R^d$, let $F_t(x)$ be the position of a particle in $\R^d$ which starts at $x$ and whose velocity at time $t$ is $g N$ evaluated at the position of the particle (see \S3.9 of \cite{agricola}).  Each $F_t$ is a diffeomorphism from $\R^d$ to itself and it is given by $F_t(x) = x + t g(x) N(x)$ to a first order in $t$, for small $t$.  If we define
$$U_i(t) \defeq F_t(U_i)$$
then $U_1(t), \ldots, U_n(t)$ is also a partition of $\cX$ for each $t$ (since $F_t$ is a diffeomorphism).  Also, $F_0$ is the identity so $U_i(0)= U_i$ for all $i=1, \ldots, n$.  Therefore we can consider $U_1(t), \ldots, U_n(t)$ to be a deformation of the partition $U_1, \ldots, U_n$.  Also, because of the restrictions on $\mbox{Supp}(g)$ above, $U_i(t) = U_i$ for all $t$ if $i \not= 1,2$ and $D$ is the only part of $\partial U_1 \cup \partial U_2$ which changes as $t$ is varied.

We now have the following key lemma.

\begin{lemma}
\label{L:cdot}
Let $i$ be $1$ or $2$ and let $c(t) \defeq \int_{U_i(t)} \rho(x) dx$ for some smooth function $\rho: \R^d \to \R$.  Then
\begin{eqnarray*}
\left. \frac{dc}{dt}\right|_{t=0} &=& \pm \int_D g \rho \, dx \\
\left. \frac{d^2c}{dt^2}\right|_{t=0} &=& \pm \int_D  g \nabla \cdot (g \rho N) dx
\end{eqnarray*}
where $\nabla \cdot (g \rho N)$ is the divergence of $g \rho N$ and both signs are positive if $i=1$ and both are negative if $i=2$.
\end{lemma}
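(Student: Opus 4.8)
The plan is to compute the derivatives of $c(t) = \int_{U_i(t)} \rho(x)\,dx = \int_{F_t(U_i)} \rho(x)\,dx$ by the standard technique of pulling the integral back to the fixed domain $U_i$ via the change of variables $x = F_t(y)$, differentiating under the integral sign, and then using the divergence theorem to convert the resulting volume integrals into surface integrals over $D$ (the only part of $\partial U_i$ that moves). First I would write $c(t) = \int_{U_i} \rho(F_t(y))\, J_t(y)\, dy$, where $J_t(y) = \det(DF_t(y))$ is the Jacobian of $F_t$; since $F_t$ is the flow of the vector field $V \defeq gN$, we have the classical transport identities $\frac{d}{dt}(\rho\circ F_t) = (\grad\rho\cdot V)\circ F_t$ and $\frac{d}{dt}J_t = ((\nabla\cdot V)\circ F_t)\, J_t$, with $J_0 = 1$ and $F_0 = \mathrm{id}$. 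Differentiating the product and setting $t=0$ gives $\frac{dc}{dt}\big|_{t=0} = \int_{U_i} \left( \grad\rho\cdot V + \rho\, \nabla\cdot V \right) dy = \int_{U_i} \nabla\cdot(\rho V)\, dy = \int_{U_i} \nabla\cdot(g\rho N)\, dy$.

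Now I would apply the divergence theorem to $\int_{U_i} \nabla\cdot(g\rho N)\, dy$. Because $\mathrm{Supp}(g)$ meets $\partial U_i$ only inside $D$, the integrand $g\rho N$ vanishes on $\partial U_i \setminus D$, so the boundary integral reduces to $\int_D (g\rho N)\cdot n\, dx$, where $n$ is the outward unit normal of $U_i$ along $D$. For $i=1$ the outward normal of $U_1$ along $D$ is exactly $N$, so $N\cdot n = 1$ and we get $+\int_D g\rho\, dx$; for $i=2$ the outward normal of $U_2$ along $D$ is $-N$, giving the minus sign. This establishes the first formula. For the second derivative, I would differentiate $c'(t) = \int_{U_i} \nabla\cdot\big((\rho\circ F_t)\,(V\circ F_t)\,J_t\big)\, dy$ — more carefully, one differentiates the expression $\int_{U_i} \big(\grad\rho\cdot V + \rho\,\nabla\cdot V\big)\circ F_t \cdot J_t\, dy$ a second time, again using the transport identities, obtaining at $t=0$ an integrand of the form $\nabla\cdot\big(V\, \nabla\cdot(\rho V)\big) = \nabla\cdot\big((gN)\,\nabla\cdot(g\rho N)\big)$ after collecting terms and recognizing it as a further total divergence. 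Applying the divergence theorem once more, the boundary term collapses to $D$ and, since $(gN)\cdot n = \pm g$ there, yields $\pm \int_D g\,\nabla\cdot(g\rho N)\, dx$, with signs as before.

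The routine parts are the two transport identities and the bookkeeping in the second differentiation; the step I expect to require the most care is verifying that the various integrands really do assemble into exact divergences $\nabla\cdot(\rho V)$ and $\nabla\cdot(V\,\nabla\cdot(\rho V))$, so that the divergence theorem is applicable, and checking that the smoothness and support hypotheses on $g$ (together with the piecewise-smooth, manifold-with-corners structure of $\partial U_i$) are enough to justify differentiating under the integral sign and applying the divergence theorem on $U_i$ — the corners form a set of codimension $2$ and so contribute nothing to the boundary integral, but this should be noted. A cleaner alternative for the second derivative, which I would mention, is to observe that $c''(t) = \frac{d}{dt}\big|_{t=s}\, c'(s)$ and that $c'(s) = \pm\int_{D_s} g\rho\, dx$ where $D_s = F_s(D)$; differentiating this surface integral at $s=0$ by the same flow-transport method (now for a hypersurface rather than a solid region) reproduces $\pm\int_D g\,\nabla\cdot(g\rho N)\, dx$ directly, since the first variation of an integral of $g\rho$ over a moving hypersurface with normal speed $g$ is $\int_D g\,\nabla\cdot(g\rho N)\, dx$ by the Reynolds transport theorem for surfaces. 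I would present whichever of these two routes is shorter in the appendix.
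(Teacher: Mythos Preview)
Your argument is correct and structurally identical to the paper's appendix proof: pull the integral back to the fixed domain via $F_t$, differentiate, recognize the result as a total divergence, apply the divergence/Stokes theorem, and use the support hypothesis on $g$ to collapse the boundary integral to $D$. The only real difference is language. The paper works with differential forms --- the pullback is $F_t^*\omega$ for $\omega = \rho\,dx_1\wedge\cdots\wedge dx_d$, the derivative is the Lie derivative $\mathcal{L}_V\omega$, your identity $\grad\rho\cdot V + \rho\,\nabla\cdot V = \nabla\cdot(\rho V)$ is Cartan's formula $\mathcal{L}_V = d\,i_V + i_V\,d$ (using $d\omega=0$), and the divergence theorem is Stokes' theorem --- whereas you stay in classical vector calculus. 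Your iteration step, observing that $c'(t) = \int_{U_i(t)} \nabla\cdot(\rho V)\,dx$ so that the first-derivative formula can be reapplied with $\nabla\cdot(\rho V)$ in place of $\rho$, is exactly the paper's $c''(0) = \int_{\partial\Omega} i_V\,d\,i_V\,\omega$ translated back. The forms version is marginally slicker for the second derivative because Cartan's formula automates the ``assembling into an exact divergence'' that you correctly flag as the step needing care; your version has the advantage of not requiring the reader to know interior products and Lie derivatives.
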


\begin{remark}
In Lemma \ref{L:cdot} and throughout this paper, we will often denote the integral of a function $\phi(x)$ over a subset $\Omega$ of $\R^d$ by $\int_\Omega \phi \, dx$ rather than $\int_\Omega \phi(x) \, dx$.
\end{remark}

\begin{proof}[Proof of Lemma \ref{L:cdot}]
See Sections 4.1, 6.2 and 6.3 of \cite{delfour}, especially equations 4.7, 6.9, 6.14 and 6.15.  The cases $i=1$ and $i=2$ have different signs because $N$ is the outward-pointing unit normal for $U_1$ but the inward-pointing unit normal for $U_2$. See the Appendix for an alternative proof of this key lemma.
\end{proof}

The following theorem gives the first and second variations of $I_1$ corresponding to the above deformation of the partition.

\begin{theorem} \label{T:Id_Idd}
For each $t \in \R$ in a neighbourhood of $0$, let $U_1(t), \ldots, U_n(t)$ be the partition given above. Let $q_i$ and $\htheta_i$ be the functions of $t$ given by expressions (\ref{E:q}) and (\ref{E:htheta}) with $U_i(t)$ replacing $U_i$, and let $I_1$ be the function of $t$ obtained by substituting these functions into (\ref{E:I1_finitesum}).  If $\Id$ and $\Idd$ are (respectively) the first and second derivatives of $I_1$ with respect to $t$ then
\begin{equation}\label{E:Id}
- \left. \Id \right|_{t=0} = \int_D r g (\lambda_1 - \lambda_2) \, dx
\end{equation}
and
\begin{eqnarray}
-\left. \Idd \right|_{t=0}
&=& \int_D (\lambda_1 - \lambda_2) g \nabla \cdot \left( g r N \right) \, dx
   + \left( \frac{1}{q_1} + \frac{1}{q_2} \right) \left(\int_D g r \, dx\right)^2 \nonumber \\
&& + \frac{1}{q_1} \delta_1 \cdot Q_1^{-1} \delta_1 + \frac{1}{q_2} \delta_2 \cdot Q_2^{-1} \delta_2
   + \int_D  g^2 r N \cdot (\htheta_1 - \htheta_2) \, dx.  \label{E:Idd}
\end{eqnarray}
where $\lambda_i(x) = \log q_i + x \cdot \htheta_i - \psi(\htheta_i)$, $\delta_i = \int_D (x-\mu(\htheta_i)) g r \, dx$ and $Q_i$ is the Hessian of $\psi$ evaluated at $\htheta_i$ (so $Q_i$ is symmetric and positive-definite by (\ref{E:hesspsi})).
\end{theorem}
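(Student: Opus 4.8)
The plan is to differentiate the finite-sum expression (\ref{E:I1_finitesum}) for $I_1$ directly in $t$, using Lemma \ref{L:cdot} to compute the $t$-derivatives of the quantities $q_i(t) = \int_{U_i(t)} r\,dx$ and $\int_{U_i(t)} x\, r\,dx$ (whence of $\mu(\htheta_i)$), and then to simplify the resulting expressions using the fact that $\htheta_i = \mu^{-1}$ of a ratio. Writing $I_1 = C - \sum_i G(q_i, \htheta_i)$ with $G(q,\htheta) = q(\log q + \htheta\cdot\mu(\htheta) - \psi(\htheta))$, note that only $i=1,2$ contribute since the other regions are fixed. A crucial preliminary simplification: when we differentiate $G$ with respect to $\htheta_i$, the defining relation (\ref{E:htheta}) says $q_i \mu(\htheta_i) = \int_{U_i(t)} x\,r\,dx =: m_i$, and by the same Lagrange/stationarity computation as in Lemma \ref{L:SMML_basic} (but now treating $q_i,\htheta_i$ as functions of $t$ determined by the integrals), one finds $\partial G/\partial \htheta_i = \htheta_i \cdot (\text{something vanishing})$, so that effectively $dG/dt = (\partial G/\partial q_i)\, \qd_i + \htheta_i \cdot \dot m_i$ with $\partial G/\partial q_i = \log q_i + \htheta_i\cdot\mu(\htheta_i) - \psi(\htheta_i) + 1 = \lambda_i(\mu(\htheta_i)) + 1$ — careful bookkeeping of the ``$+1$'' and of which argument $\lambda_i$ is evaluated at will be needed, but after summing over $i=1,2$ the ``$+1$'' terms cancel against $\qd_1 + \qd_2 = 0$.

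For the \textbf{first variation}, apply Lemma \ref{L:cdot} with $\rho = r$ to get $\qd_i|_{t=0} = \pm\int_D g r\,dx$ and with $\rho(x) = x_j r(x)$ to get $\dot m_i|_{t=0} = \pm \int_D x\, g r\,dx$ (signs $+$ for $i=1$, $-$ for $i=2$). Substituting into $-\Id = \sum_{i=1,2}(dG/dt)$ and collecting, the coefficient of $\int_D gr\,dx$ from region $i$ is $\pm(\log q_i - \psi(\htheta_i))$ and the $\htheta_i\cdot\dot m_i$ term contributes $\pm \htheta_i\cdot\int_D x g r\,dx = \pm\int_D (x\cdot\htheta_i) g r\,dx$; together these give $\pm\int_D \lambda_i g r\,dx$, and the sum over $i=1,2$ is exactly $\int_D (\lambda_1 - \lambda_2) g r\,dx$, which is (\ref{E:Id}). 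Here $\lambda_i(x) = \log q_i + x\cdot\htheta_i - \psi(\htheta_i)$ is evaluated at the running point $x\in D$, not at $\mu(\htheta_i)$, which is why the ``$+1$'' and the $\htheta_i\cdot\mu(\htheta_i)$ pieces must have dropped out — this is the place to be most careful.

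For the \textbf{second variation}, differentiate once more. There are several sources of terms: (i) the second derivative of the integrals, for which Lemma \ref{L:cdot} gives $\ddot c_i|_{t=0} = \pm\int_D g\,\nabla\cdot(g\rho N)\,dx$ — with $\rho = r$ this produces the term $\int_D(\lambda_1-\lambda_2) g \nabla\cdot(grN)\,dx$, and with $\rho(x)=x_j r$ it produces, after contracting with $\htheta_1 - \htheta_2$ and using $\nabla\cdot(g x_j r N) = x_j\nabla\cdot(grN) + grN_j$, the leftover boundary-type term $\int_D g^2 r\, N\cdot(\htheta_1-\htheta_2)\,dx$; (ii) the term from $\partial^2 G/\partial q_i^2 = 1/q_i$ times $(\qd_i)^2 = (\int_D gr\,dx)^2$, giving $(\tfrac{1}{q_1}+\tfrac1{q_2})(\int_D gr\,dx)^2$; (iii) the term from differentiating $\htheta_i$, which requires the derivative of $\htheta_i = \mu^{-1}(m_i/q_i)$: by the chain rule and (\ref{E:hesspsi}), $\thetad_i = Q_i^{-1}\frac{d}{dt}(m_i/q_i) = Q_i^{-1}\big(\dot m_i - \mu(\htheta_i)\qd_i\big)/q_i = \pm Q_i^{-1}\delta_i/q_i$ with $\delta_i = \int_D(x-\mu(\htheta_i)) g r\,dx$, and the relevant second-order contribution from $G$ in the $\htheta$-direction is $\tfrac12\,\thetad_i\cdot(\partial^2 G/\partial\htheta_i^2)\,\thetad_i$ where $\partial^2 G/\partial\htheta_i^2 = q_i Q_i$ (again using that the first-order-in-$\htheta$ piece vanishes by stationarity), yielding $\tfrac{1}{q_i}\delta_i\cdot Q_i^{-1}\delta_i$ after substitution; cross terms $\partial^2 G/\partial q_i\partial\htheta_i$ vanish or cancel for the same reason the first variation simplified. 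Summing the $i=1,2$ contributions from (i)--(iii) assembles exactly (\ref{E:Idd}).

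The \textbf{main obstacle} is the careful second-order bookkeeping in step (iii): one must expand $G(q_i(t),\htheta_i(t))$ to second order in $t$ keeping track of all of $\qdd_i$, $\thetadd_i$, and the Hessian of $G$, and verify that every term involving $\qdd_i$, $\thetadd_i$, and $\mu(\htheta_i)$ either cancels between $i=1$ and $i=2$ (using $\qd_1+\qd_2=0$, $\qdd_1+\qdd_2=0$, $\dot m_1 + \dot m_2 = 0$, $\ddot m_1 + \ddot m_2 = 0$, which hold because $U_1\cup U_2$ is fixed setwise) or is absorbed into the stated terms via $\nabla\cdot(gx_jrN)=x_j\nabla\cdot(grN)+grN_j$. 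In other words, the reason the messy second-derivative terms of the \emph{integrals} survive only in the combinations shown is a partial-integration/product-rule identity on $D$, and the reason the messy second derivatives of $q_i,\htheta_i$ disappear is the stationarity structure of $G$ together with the fixed total mass and first moment over $U_1\cup U_2$. Organizing the computation as ``$I_1$ is a smooth function of $(q_1,\htheta_1,q_2,\htheta_2)$, composed with curves in that space whose first derivatives are given by Lemma \ref{L:cdot}, and whose second derivatives enter only through terms that cancel'' is the conceptually cleanest route.
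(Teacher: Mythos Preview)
Your overall strategy matches the paper's exactly: differentiate (\ref{E:I1_finitesum}) in $t$, feed in Lemma \ref{L:cdot} with $\rho=r$ and $\rho=x_jr$, and exploit that the $\thetad_i$-terms drop out of the first derivative because $\partial G/\partial\htheta_i = m_i - q_i\nabla\psi(\htheta_i)=0$ once one writes $G=q_i\log q_i+\htheta_i\cdot m_i-q_i\psi(\htheta_i)$ with $m_i=\int_{U_i}xr\,dx$. Your first-variation argument and parts (i)--(ii) of the second variation are correct and are just what the paper does.

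The genuine gap is in part (iii). In the $(q_i,\htheta_i,m_i)$ variables one has $\partial^2 G/\partial\htheta_i^2=-q_iQ_i$ (negative, not $+q_iQ_i$), and the cross-derivatives $\partial^2G/\partial\htheta_i\partial m_i=I$ and $\partial^2G/\partial\htheta_i\partial q_i=-\mu(\htheta_i)$ are \emph{not} zero; they combine to give $2\thetad_i\cdot(\dot m_i-\mu(\htheta_i)\qd_i)=2q_i\thetad_i\cdot Q_i\thetad_i$, and only the sum $-q_i\thetad_i\cdot Q_i\thetad_i+2q_i\thetad_i\cdot Q_i\thetad_i=+q_i\thetad_i\cdot Q_i\thetad_i$ yields the desired $\tfrac1{q_i}\delta_i\cdot Q_i^{-1}\delta_i$. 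Your ``$\tfrac12\thetad_i\cdot(q_iQ_i)\thetad_i$ with cross terms vanishing'' gets the right final expression only by two cancelling mistakes (wrong sign of the Hessian entry, omitted cross terms, and a spurious $\tfrac12$ from confusing a Taylor expansion with $d^2/dt^2$ --- indeed your own formula evaluates to $\tfrac1{2q_i}\delta_i\cdot Q_i^{-1}\delta_i$, not $\tfrac1{q_i}\delta_i\cdot Q_i^{-1}\delta_i$). The paper sidesteps this by not computing the Hessian of $G$ at all: it differentiates the first-derivative identity $-\Id=\sum_i\bigl(\qd_i(\log q_i-\psi(\htheta_i))+\htheta_i\cdot\dot m_i\bigr)$ once more in $t$, which produces the single combination $-\qd_i\thetad_i\cdot\hmu_i+\thetad_i\cdot\dot m_i=\thetad_i\cdot\delta_i$ directly, and then substitutes $\thetad_i=\pm Q_i^{-1}\delta_i/q_i$.
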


\begin{proof}
See Appendix \ref{S:proofs}.
\end{proof}

\begin{remark} \label{R:Qi_is_FIM}
Note that $Q_i^{-1}$ in Theorem \ref{T:Id_Idd} is the Fisher information matrix for the expectation parameterisation evaluated at the expectation parameter $\mu(\htheta_i)$ \cite[Theorem 2.2.5]{kass}.
\end{remark}

\section{The partition of an SMML estimator}
\label{S:convex}

We can now prove our main theorem.  While this is essentially Wallace's R1 condition \cite[p. 156]{wallace} specialized to exponential families, we give a new proof based on the calculus of variations, and this approach implies some interesting new inequalities for SMML estimators (see Section \ref{S:ineq}).

\begin{theorem}
\label{T:Ui}
If an SMML estimator has partition $U_1, \ldots, U_n$, assertions $\htheta_1, \ldots, \htheta_n$ and coding probabilities $q_1, \ldots, q_n$ then
\begin{equation}\label{E:Ui}
U_i = \{ x \in \cX \mid \lambda_i(x) \ge \lambda_j(x) \mbox{ for all $j=1,\ldots, n$} \}
\end{equation}
where $\lambda_i$ is the linear function of $x$ given by $\lambda_i(x) = \log q_i + x \cdot \htheta_i - \psi(\htheta_i)$.  In particular, each $U_i$ is a convex polytope determined by $\htheta_1, \ldots, \htheta_n$ and $q_1, \ldots, q_n$.
\end{theorem}

\begin{proof}  This proof can be skipped without compromising the reader's understanding of the rest of the paper.

As in the statement, let an SMML estimator have partition $U_1, \ldots, U_n$, assertions $\htheta_1, \ldots, \htheta_n$ and coding probabilities $q_1, \ldots, q_n$.  If we define
$$ V_i \defeq \{ x \in \cX \mid \lambda_i(x) \ge \lambda_j(x) \mbox{ for all $j=1,\ldots, n$} \}$$
then our goal is to prove that $U_i = V_i$ for all $i=1,\ldots, n$.

For $i \not= j$, let $P_{ij} \defeq \{ x \in \R^d \mid \lambda_i(x) = \lambda_j(x) \}$ and $H_{ij} \defeq \{ x \in \R^d \mid \lambda_i(x) \ge \lambda_j(x) \}$. For any $i=1,\ldots, n$, let $C_1^{(i)}, \ldots, C_{m_i}^{(i)}$ be the closures of the connected components of $\R^d \setminus \left(\cup_{j: j \not= i} P_{ij}\right)$, i.e. each $C_k^{(i)}$ is a $d$-dimensional convex polytope with boundary lying in $\cup_{j: j \not= i} P_{ij}$ but whose interior is disjoint from all of these hyperplanes.

{\em Claim 1: $U_i$ is the union of one or more of $C_1^{(i)}, \ldots, C_{m_i}^{(i)}$.}  Assume without loss of generality that $i=1$ and that $U_1$ meets $U_2$ in a $(d-1)$-dimensional face.  As in Section \ref{S:deformations}, let $U_1(t), \ldots, U_n(t)$ be a deformation of the partition $U_1, \ldots, U_n$ corresponding to some $g$, $N$ and $D$ and let $q_i$, $\htheta_i$ and $I_1$ be functions of $t$ as in Theorem \ref{T:Id_Idd}.  An SMML estimator is a global minimum of $I_1$ so $\Id = 0$ and $\Idd \ge 0$ at $t=0$ for all deformations, so in particular these relations hold for any $g$, $N$ and $D$.  But by (\ref{E:Id}), $\Id = \int_D r g (\lambda_1 - \lambda_2) \, dx$, and this integral can vanish for all $g$ only if the integrand vanishes on $D$.  So since $r>0$, we must have $\lambda_1(x) = \lambda_2(x)$ for all $x \in D$, i.e. $D$ is contained in the hyperplane $P_{12}$.

Since $D$ is an arbitrary smooth $(d-1)$-dimensional disc contained in $U_1 \cap U_2$, this shows that all of $U_1 \cap U_2$ is contained in $P_{12}$ except perhaps a set of dimension $d-2$ where $\partial U_1$ or $\partial U_2$ is not smooth or where $\partial U_1 \cap \partial U_2$ has dimension $d-2$ or less.  Therefore a dense subset of $U_1 \cap U_2$ is contained in $P_{12}$.  But $P_{12}$ is closed so this implies that $U_1 \cap U_2 \subseteq P_{12}$.  Similar comments hold with $U_2$ replaced by any $U_j$ which shares a $(d-1)$-dimensional face with $U_1$.  Therefore $\partial U_1 \subseteq \cup_{j: j \not= 1} P_{1j}$ and the claim follows.

{\em Claim 2: If $C_k^{(i)} \subseteq U_i$, $C_k^{(i)}$ meets $P_{ij}$ in a $(d-1)$-dimensional set and $C_k^{(i)} \cap P_{ij} \subseteq \partial U_i$ then $C_k^{(i)} \subseteq H_{ij}$.}
Assume, without loss of generality, that $i=1$ and $j=2$.  Let $g$, $N$ and $D$ be as above and let $C_k^{(1)}$ be as in the statement of this claim.  Since $D$ is arbitrary, we can assume additionally that $D \subseteq C_k^{(1)} \cap P_{12}$.  Since $D \subseteq P_{12}$, the vector field $N$ is constant on $D$ and $\lambda_1 = \lambda_2$ there, so by (\ref{E:Idd}), $\Idd \ge 0$ is equivalent to
\begin{eqnarray*}
0 &\ge& -\Idd \\
&=& \left( \frac{1}{q_1} + \frac{1}{q_2} \right) \left(\int_D g r \, dx\right)^2
+ \frac{1}{q_1} \delta_1 \cdot Q_1^{-1} \delta_1 + \frac{1}{q_2} \delta_2 \cdot Q_2^{-1} \delta_2
   + N \cdot (\htheta_1 - \htheta_2) \int_D  g^2 r  \, dx
\end{eqnarray*}
and hence
\begin{equation}\label{E:Ntheta}
N \cdot (\htheta_2 - \htheta_1)  \ge \left[ \int_D  g^2 r  \, dx \right]^{-1} \left(
\left[ \frac{1}{q_1} + \frac{1}{q_2} \right] \left[\int_D g r \, dx\right]^2
+ \frac{1}{q_1} \delta_1 \cdot Q_1^{-1} \delta_1 + \frac{1}{q_2} \delta_2 \cdot Q_2^{-1} \delta_2 \right).
\end{equation}
By Theorem \ref{T:Id_Idd}, $Q_1$ and $Q_2$ are positive definite, so all terms on the right-hand side of (\ref{E:Ntheta}) are non-negative for all $g$ and strictly positive for some $g$, so
\begin{equation}\label{E:ineq}
N \cdot (\htheta_2 - \htheta_1) > 0.
\end{equation}

Now, $\grad(\lambda_1(x) - \lambda_2(x)) = \htheta_1 - \htheta_2$ so $\lambda_1 - \lambda_2$ is increasing in the direction $\htheta_1 - \htheta_2$.  Since $\lambda_1 - \lambda_2 = 0$ on $D$, $\lambda_1 > \lambda_2$ locally on the side of $D$ into which $\htheta_1 - \htheta_2$ points.  But this must be the $U_1$ side of $D$, since $N \cdot (\htheta_2 - \htheta_1) > 0$ and $N$ is, by definition, the unit normal to $D$ which points out of $U_1$ and into $U_2$.
But $C_k^{(1)}$ lies entirely on one side or the other of $P_{12}$, so the fact that part of it lies on the side where $\lambda_1 \ge \lambda_2$ implies all of it does, i.e. $C_k^{(1)} \subseteq H_{12}$.

{\em Claim 3: If $V_i$ has non-zero measure then $V_i \subseteq U_i$. }
Assume $V_i$ has non-zero measure.  Note that this implies $V_i = C_k^{(i)}$ for some $k$.  For each $C_k^{(i)}$, let $\#(C_k^{(i)})$ be the number of half-spaces $H_{ij}$ (for $i$ fixed and $j$ varying) so that $C_k^{(i)} \subseteq H_{ij}$.  Then $\#(C_k^{(i)}) \le n-1$ and $\#(C_k^{(i)}) = n-1$ if and only if $C_k^{(i)}$ lies in all $H_{ij}$ for $j\not=i$, i.e. $C_k^{(i)} = \cap_{j: j \not= i} H_{ij} = V_i$.  Let $C_k^{(i)}$ have maximal $\#(C_k^{(i)})$ out of all $C_1^{(i)}, \ldots, C_{m_i}^{(i)}$ which are contained in $U_i$.  If $C_k^{(i)} = V_i$ then $V_i \subseteq U_j$ and the claim is proved so assume, in order to derive a contradiction, that $C_k^{(i)} \not= V_i$.

If $C_k^{(i)} = \cap_{j \in J} H_{ij}$ for some $J \subseteq \{ 1, \ldots, n \} \setminus \{ i \}$ then $C_k^{(i)} = \cap_{j \in J} H_{ij} \supseteq \cup_{j: j \not= i} H_{ij} = V_i$, but then $C_k^{(i)} = V_i$ since distinct $C_1^{(i)}, \ldots, C_{m_i}^{(i)}$ only overlap in sets of measure $0$.  Taking $J$ to be the set of all $j$ so that $C_k^{(i)} \cap P_{ij}$ is a $(d-1)$-dimensional face, we therefore see that there is some $j \not= i$ so that $F \defeq C_k^{(i)} \cap P_{ij}$ is a $(d-1)$-dimensional face of $C_k^{(i)}$ but $C_k^{(i)}$ is not contained in $H_{ij}$.  By Claim 2, $F$ cannot be contained in $\partial U_i$, so $C_l^{(i)} \subseteq U_i$ where $C_l^{(i)}$ lies on the opposite side of $F$ to $C_k^{(i)}$.  But $\#(C_l^{(i)}) = \#(C_k^{(i)}) + 1$ since $C_l^{(i)}$ is on the same side of every $P_{ij^\prime}$ as $C_k^{(i)}$ except $P_{ij}$, and $C_l^{(i)} \subseteq H_{ij}$ while $C_k^{(i)} \nsubseteq H_{ij}$.  But this contradicts our choice of $C_k^{(i)}$ as one of the $C_1^{(i)}, \ldots, C_{m_i}^{(i)}$ contained in $U_i$ with maximal $\#(C_k^{(i)})$.  Therefore $C_k^{(i)} = V_i$ and the claim is proved.

{\em Claim 4: Each $V_i$ has non-zero measure. }
Suppose, in order to derive a contradiction, that $V_1$ has zero measure.  If $V_1, \ldots, V_k$ have zero measure and $V_{k+1}, \ldots, V_n$ have non-zero measure for some $k \ge 1$ then $V_i \subseteq U_i$ for all $i > k$ by Claim 3.  But $U_1$ meets each $U_j$ (for $j \not= 1$) in a set of zero measure, so $U_1$ also meets each $V_i$ in a set of zero measure when $i > k$.  Also, $V_1, \ldots, V_k$ all have zero measure so $U_1$ must meet them in sets of zero measure, too.  Hence $U_1 \subseteq \cX$ meets $\cup_{i=1}^n V_i = \cX$ in a set of zero measure, so $U_1$ has zero measure.  But this contradicts the fact that $U_1, \ldots, U_n$ is a partition, so the claim is proved.

{\em Claim 5: $V_i = U_i$. }
By Claims 3 and 4, $V_i \subseteq U_i$.  So by Claim 1, if some $V_i \not= U_i$ then there exists some $C_k^{(i)}$ which is contained in $U_i$ but meets $V_i$ in a set of measure $0$.  But $V_1, \ldots, V_n$ is a partition of $\cX$ so there is some $j \not= i$ so that $C_k^{(i)}$ meets $V_j$ in a set $C$ of non-zero measure.  But $V_j \subseteq U_j$ so $C$ lies in both $U_i$ and $U_j$, contradicting the fact that $U_1, \ldots, U_n$ is a partition and proving the claim and hence the Theorem.
\end{proof}

\begin{remark}
Note that Wallace's R1 \cite[p. 156]{wallace} implies that the partition of an SMML estimator consists of convex polytopes if and only if the stochastic family under consideration is an exponential family.
\end{remark}

\begin{remark}
Theorem \ref{T:Ui} also implies that each $U_i$ is the projection to $\R^d$ of one of the facets (i.e. $d$-dimensional faces) of the convex polytope
$$ \{ (x,y) \in \R^d \times \R \mid y \ge \lambda_i(x) \mbox{ for all $i=1,\ldots, n$} \}.$$
This description is useful in practice when trying to construct the partition corresponding to given assertions and coding probabilities.
\end{remark}

We also have the following corollary, which generalizes the one given in \cite{dowty} to higher dimensions.  Recall from Section \ref{S:review} that $q(x)$ and $\htheta(x)$ are step functions which are constant on the interior of each $U_i$ and are not defined on $\cup_{i=1}^n \partial U_i$, where two or more of the $U_i$ overlap.

\begin{corol}
$$ q(x) f(x | \htheta(x))  =  \max_{i \in \{1, \ldots, n\} } h(x) \exp (\lambda_i(x)) $$
for all $x$ in the dense subset of $\cX$ where the left-hand side is defined.  So even though $q(x) f(x | \htheta(x))$ is composed of step functions, it extends continuously to all of $\cX$.
\end{corol}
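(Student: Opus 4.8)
The plan is to read the statement off directly from Theorem~\ref{T:Ui} together with the first Remark following it. Fix an SMML estimator with partition $U_1,\ldots,U_n$, assertions $\htheta_i$ and coding probabilities $q_i$. On $\mathrm{int}(U_i)$ the step functions are $q(x)=q_i$ and $\htheta(x)=\htheta_i$, so by that Remark $q(x)f(x\mid\htheta(x))=h(x)\exp(\lambda_i(x))$ there. By Theorem~\ref{T:Ui} every point of $U_i$ satisfies $\lambda_i(x)\ge\lambda_j(x)$ for all $j$; since $h(x)>0$ and $t\mapsto e^{t}$ is increasing, this gives $h(x)\exp(\lambda_i(x))=\max_{j}h(x)\exp(\lambda_j(x))$. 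Combining, $q(x)f(x\mid\htheta(x))=\max_{j}h(x)\exp(\lambda_j(x))$ on $\mathrm{int}(U_i)$, and letting $i$ range over $1,\ldots,n$ gives the claimed identity on $\bigcup_{i=1}^{n}\mathrm{int}(U_i)$.

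Next I would verify that $\bigcup_{i=1}^{n}\mathrm{int}(U_i)$ is exactly the set where the left-hand side is defined and that it is dense. The left-hand side is undefined only on $\bigcup_i\partial U_i$, the set where two or more regions meet, so its domain is $\cX\setminus\bigcup_i\partial U_i$; using $U_i\cap U_j\subseteq\partial U_i\cap\partial U_j$ for $i\ne j$ (assumed in Section~\ref{S:deformations}), this domain is precisely $\bigcup_i\mathrm{int}(U_i)$. Density holds because the $U_i$ are finitely many closed sets covering $\cX$, so $\bigcup_i\partial U_i$ is closed with empty interior.

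For the final assertion, note that $x\mapsto\max_{i}h(x)\exp(\lambda_i(x))$ is a maximum of the $n$ continuous functions $h(x)\exp(\lambda_i(x))$ (each $\lambda_i$ is affine and $h$ is continuous), hence continuous on all of $\cX$; as it agrees with the step-function product $q(x)f(x\mid\htheta(x))$ on the dense set just described, it is its unique continuous extension. I do not anticipate a real obstacle here: all the substance is carried by Theorem~\ref{T:Ui}, and what remains is the bookkeeping in the second paragraph together with the elementary fact that a finite maximum of continuous functions is continuous. The one hypothesis worth flagging is the continuity of $h$, which is precisely what makes the right-hand side continuous.
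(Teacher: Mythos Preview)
Your proof is correct and follows essentially the same route as the paper's: identify $q(x)f(x\mid\htheta(x))$ with $h(x)\exp(\lambda_j(x))$ on the interior of $U_j$, then invoke Theorem~\ref{T:Ui} to replace $\lambda_j(x)$ by $\max_i\lambda_i(x)$. You are in fact more careful than the paper about the density of the domain and about the continuous-extension claim (where you rightly flag that continuity of $h$ is needed), points the paper leaves implicit.
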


\begin{proof}
Note the left-hand side of the equation in the statement is defined exactly when $x$ lies in the interior of some $U_j$.  But in that case,
$$\log \left(q(x) f(x | \htheta(x))\right) = \log \left(q_j f(x | \htheta_j)\right)
= \log h(x) + \lambda_j(x) = \log h(x) + \max_{i \in \{1, \ldots, n\} } \lambda_i(x)$$
where the second equality used (\ref{E:exp_family}) and the last equality used  Theorem \ref{T:Ui}.
Since this formula holds for all $j$, taking exponentials completes the proof.
\end{proof}

\section{Inequalities obtained from the condition $\Idd \ge 0$}
\label{S:ineq}

In this section we use the condition $\Idd \ge 0$ on the second variation of $I_1$ to derive some novel inequalities.

Let $g$, $N$ and $D$ be as in Section \ref{S:deformations}, i.e., $D$ is a smooth $(d-1)$-dimensional, curvilinear disc contained in $U_1 \cap U_2$, $N$ is a smooth vector field which coincides on $D$ with the unit normal vector field to $D$ and $g : \R^d \to \R$ is a smooth function with support in $U_1 \cup U_2$.  Then we have the following general inequality.

\begin{lemma} \label{L:ineq2}  If an SMML estimator has partition $U_1, \ldots, U_n$, assertions $\htheta_1, \ldots, \htheta_n$ and coding probabilities $q_1, \ldots, q_n$ then
\begin{equation}\label{E:ineq2}
\| \htheta_1 - \htheta_2 \|  \ge \left[ \int_D  g^2 r  \, dx \right]^{-1} \left(
\left[ \frac{1}{q_1} + \frac{1}{q_2} \right] \left[\int_D g r \, dx\right]^2
+ \frac{1}{q_1} \delta_1 \cdot Q_1^{-1} \delta_1 + \frac{1}{q_2} \delta_2 \cdot Q_2^{-1} \delta_2 \right)
\end{equation}
for any $g$ and $D$ as above, where $\delta_i = \int_D (x-\mu(\htheta_i)) g r \, dx$ and $Q_i^{-1}$ is the Fisher information matrix for the expectation parameterisation evaluated at the expectation parameter $\mu(\htheta_i)$.
\end{lemma}

\begin{proof}[Proof of Lemma \ref{L:ineq2}]
An SMML estimator is a global minimum of $I_1$ so $\Id = 0$ and $\Idd \ge 0$ at $t=0$, and these relations must hold for all deformations so they hold for any $g$ and $D$.

By (\ref{E:Id}), $\Id = \int_D r g (\lambda_1 - \lambda_2) \, dx$, and this integral can vanish for all $g$ only if the integrand vanishes on $D$.  Since $r>0$, $\Id = 0$ therefore implies $\lambda_1 - \lambda_2$ vanishes on $D$.

Using this, (\ref{E:Idd}) and Remark \ref{R:Qi_is_FIM}, we then see that $\Idd \ge 0$ is equivalent to
\beq \label{E:Ntheta2}
\int_D  g^2 r N \cdot (\htheta_2 - \htheta_1) \, dx \ge \left[ \frac{1}{q_1} + \frac{1}{q_2} \right] \left[\int_D g r \, dx\right]^2
+ \frac{1}{q_1} \delta_1 \cdot Q_1^{-1} \delta_1 + \frac{1}{q_2} \delta_2 \cdot Q_2^{-1} \delta_2.
\eeq
The fact that $\lambda_1 - \lambda_2$ vanishes on $D$ implies that the unit normal $N$ to $D$ is proportional to $\grad(\lambda_1 - \lambda_2) = \htheta_1 - \htheta_2$, so $N \cdot (\htheta_2 - \htheta_1) = \pm \| \htheta_2 - \htheta_1 \|$.  It is not hard to see that the correct sign here is $+$, since $Q_1$ and $Q_2$ are positive definite by Theorem \ref{T:Id_Idd}, so all terms on the right-hand side of (\ref{E:Ntheta2}) are non-negative.  Hence $N \cdot (\htheta_2 - \htheta_1) = \| \htheta_2 - \htheta_1 \|$, so plugging this into (\ref{E:Ntheta2}) and rearranging proves the lemma.
\end{proof}

We now apply this general inequality to a particular case.  Suppose $\cX = \Theta = \R^d$ and that our exponential family consists of $d$-dimensional normal random variables with a variance-covariance matrix equal to the identity, i.e., $(X|\theta) \sim N_d(\theta,I_d)$.  Then $\psi(\theta) = \frac{1}{2} \| \theta \|^2$ so $\mu(\theta) = \theta$ and $Q_i = I_d$.  Let $\pi$ be the Jeffreys prior on a large region $R$ of $\R^d$, so that $\pi(\theta) = 1/V$ on $R$, where $V$ is the Euclidean volume of $R$, and $\pi(\theta) = 0$ outside of $R$.  The marginal distribution $r(x)$ is therefore approximately $r(x) = 1/V$ on $R$ and $r(x)=0$ outside of $R$, and this approximation is good except in a neighbourhood of the boundary of $R$.  So away from the boundary of $R$, we have the following result.

\begin{corol} For the exponential family $(X|\theta) \sim N_d(\theta,I_d)$ and the truncated Jeffreys prior described above, the partition and assertions of an SMML estimator must satisfy
\begin{equation} \label{E:ineq3}
\| \htheta_i - \htheta_j \| \ge \frac{A}{V_i} \left( 1 + \| \htheta_D - \htheta_i \| ^2 \right)
+   \frac{A}{V_j} \left( 1 + \| \htheta_D - \htheta_j \| ^2 \right)
\end{equation}
where $A$ is the (Euclidean) area of $U_i \cap U_j$, $V_i$ is the volume of $U_i$ and $\htheta_D$ is the centre of mass of $U_i \cap U_j$.  If the assertions form a lattice then writing $v = V_i/A$ we have
\beq \label{E:ineq4}
2 \le v \mbox{ and } v - \sqrt{v^2 - 4} \le \| \htheta_i - \htheta_j \| \le v + \sqrt{v^2 - 4}.
\eeq
\end{corol}

\begin{proof}
Take $D = U_i \cap U_j$ and $g(x)=1$ on $D$ in Lemma \ref{L:ineq2}.  Then $q_i = \int_{U_i} r(x) dx = V_i/V$  from (\ref{E:q}) and
$$\delta_i = \frac{1}{V} \int_D (x-\mu(\htheta_i)) \, dx
= \frac{1}{V} \int_D (x-\htheta_i) \, dx = \frac{A}{V}(\htheta_D - \htheta_i),$$
since $\mu(\theta) = \theta$.  Also, $Q_i$ is the identity matrix.  So plugging all of this into (\ref{E:ineq2}) proves (\ref{E:ineq3}).  If the assertions form a lattice then the partition is the corresponding Voronoi tessellation \cite[\S 3.3.4]{wallace} so $V_i = V_j$ and $\htheta_D$ lies in the plane equidistant from $\htheta_i$ and $\htheta_j$, hence
$$\| \htheta_D - \htheta_i \|^2
= \| \htheta_D - \frac{1}{2}(\htheta_i + \htheta_j) \|^2 + \| \htheta_i - \frac{1}{2}(\htheta_i + \htheta_j) \|^2
\ge \frac{1}{4}\| \htheta_i - \htheta_j \|^2.$$
Then (\ref{E:ineq4}) follows easily from (\ref{E:ineq3}).
\end{proof}

\section{Constructing SMML estimators}
\label{S:construction}

The usual approach to constructing an SMML estimator is to use (\ref{E:q}) and (\ref{E:htheta}) to, in effect, parameterize the assertions and coding probabilities by the partition and then to try to find the partition which minimizes the expression (\ref{E:I1_finitesum}) for $I_1$ \cite{wallace, farr, dowty}.  Theorem \ref{T:Ui} allows us to reverse this approach, i.e. to use the assertions and coding probabilities to parameterize the partition.  This is useful when $d \ge 2$ because then the set of all possible partitions is infinite dimensional while the assertions and coding probabilities are described by $n(d+1)$ numbers.  With this parametrization, (\ref{E:q}) and (\ref{E:htheta}) become $n(d+1)$ equations which are satisfied at the SMML estimator.  It is therefore possible to find the SMML estimator for a given number $n$ of regions by solving these equations.

In the case $d=1$, the above approach finds an SMML estimator by solving $2n$ equations in $2n$ unknowns while the approach of \cite{dowty} for the same problem solves $n-1$ equations in $n-1$ unknowns.  Therefore the method of \cite{dowty} is probably more efficient than the one above for $1$-dimensional problems.

Figure \ref{F:normal_normal} shows the partition of a `likely' SMML estimator for $2$-dimensional normal data with variance equal to the identity matrix and a normal prior for the mean, i.e. $(X|\theta) \sim N_2(\theta,I)$ and $\theta \sim N_2(0,4 I)$.  This was obtained by iteratively updating $\htheta_1, \ldots, \htheta_n$ and $q_1, \ldots, q_n$ (after starting at random values) by calculating $U_1, \ldots, U_n$ from Theorem \ref{T:Ui} and replacing each $q_i$ and $\htheta_i$ by the right-hand sides of (\ref{E:q}) and (\ref{E:htheta}), respectively.  This process was repeated until no $q_i$ and no coordinate of $\htheta_i$ changed by more than $10^{-5}$.  The resulting parameters and partition therefore approximately satisfy (\ref{E:q}), (\ref{E:htheta}) and (\ref{E:Ui}), so it is likely to be a good approximation to an SMML estimator.  Note, however, that while SMML estimators satisfy (\ref{E:q}), (\ref{E:htheta}) and (\ref{E:Ui}), the reverse might not be true, so we can only claim that this is a `likely' SMML estimator.  Note also that the symmetry of the normal-normal case considered above means that rotating an SMML estimator by any angle about the origin will give another SMML estimator, so there are infinitely many SMML estimators in this case.

\begin{figure}
  \centering
  \includegraphics[width=15cm]{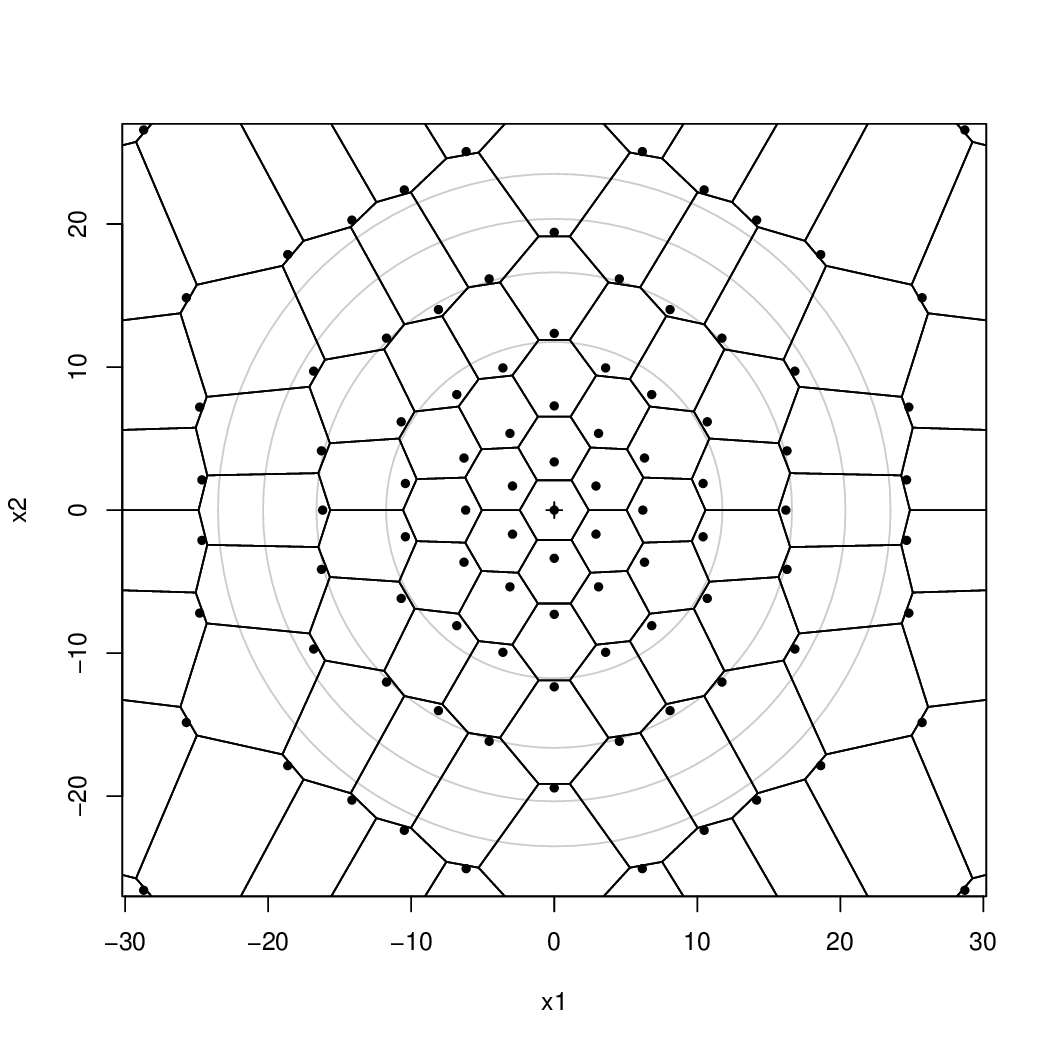} \\
  \caption{The partition of a likely SMML estimator for $2$-dimensional normal data with known variance and a normal prior for the mean.  Each dot shows the position of the centre of mass for the corresponding subset $U_i$.  The probability of a data point falling outside one of the grey circles is $10^{-m}$ where $m = 6,12,18,24$ in order of the innermost to outermost circle.}
  \label{F:normal_normal}
\end{figure}

\section{Summary}
\label{S:conclusion}

We studied SMML estimators for $d$-dimensional exponential families with continuous sufficient statistics.  Because the data space is continuous, we could use methods from calculus to study how the expected two-part code length $I_1$ changed under small deformations of the partition.  Since SMML estimators are global minima of $I_1$, all deformations of the partition of an SMML estimator must satisfy the conditions $\Id = 0$ and $\Idd \ge 0$ on the first and second variations of $I_1$.  These conditions were then used to prove that the partition of an SMML estimator consists of certain convex polytopes determined by its assertions and coding probabilities.  We also used the conditions $\Id = 0$ and $\Idd \ge 0$ to prove some novel inequalities for SMML estimators.  We further described a general method for constructing SMML estimators for exponential families, and we used this method to calculate an SMML estimator for a $2$-dimensional exponential family.  While the results given here apply for all $d$, this approach is probably less efficient than the one given in \cite{dowty} when $d=1$.

\appendix
\section{Proofs of technical lemmas}
\label{S:proofs}

We first prove the first and second variation formulae for $I_1$ under small deformations.

\begin{proof}[Proof of Theorem \ref{T:Id_Idd}]
Let $\hmu_i \defeq \mu(\htheta_i)$.  Then by (\ref{E:I1_finitesum}),
$$ C - I_1 = \sum_{i=1}^n \left(  q_i \left(\log q_i - \psi(\htheta_i) \right) +  \htheta_i \cdot q_i \hmu_i \right)$$
so differentiating both sides with respect to $t$ and denoting derivatives with dots gives
$$ -\Id = \sum_{i=1}^n \left(  \qd_i \left(\log q_i - \psi(\htheta_i)\right) + q_i \left(\frac{\qd_i}{q_i} - \frac{d}{dt}\psi(\htheta_i)\right)
          + \thetad_i \cdot q_i \hmu_i + \htheta_i \cdot \frac{d}{dt}\left(q_i \hmu_i\right) \right).$$
Now,
$$ \frac{d}{dt}\psi(\htheta_i) = \thetad_i \cdot \grad \psi |_{\htheta_i} = \thetad_i \cdot \hmu_i$$
by the chain rule and (\ref{E:gradpsi}), and $0 = \sum_{i=1}^n \qd_i$ since $1 = \sum_{i=1}^n q_i$ for all $t$.  Therefore
\begin{equation}\label{E:Id_general}
-\Id = \sum_{i=1}^n \left(  \qd_i \left(\log q_i - \psi(\htheta_i)\right) +
                       \htheta_i \cdot \frac{d}{dt}\left(q_i \hmu_i\right) \right).
\end{equation}
Differentiating this again and denoting second derivatives by double dots gives
\begin{equation}\label{E:Idd_general}
-\Idd = \sum_{i=1}^n \left(  \qdd_i \left(\log q_i - \psi(\htheta_i)\right) + \qd_i \left(\frac{\qd_i}{q_i} - \thetad_i \cdot \hmu_i \right) +
                       \thetad_i \cdot \frac{d}{dt}\left(q_i \hmu_i\right)
                       + \htheta_i \cdot \frac{d^2 \,}{dt^2}\left(q_i \hmu_i\right) \right).
\end{equation}

We now apply Lemma \ref{L:cdot} to calculate the derivatives of $q_i$ and $q_i \hmu_i$.  Setting $\rho = r$ in this lemma gives
\begin{eqnarray}
\qd_1 = - \qd_2 &=& \int_D g r \, dx  \label{E:qd} \\
\qdd_1 = - \qdd_2 &=& \int_D  g \nabla \cdot \left( g r N \right) \, dx \label{E:qdd}
\end{eqnarray}
where all the derivatives are evaluated at $t=0$.  Setting $\rho(x) = x_j r(x)$ in Lemma \ref{L:cdot}, where $x = (x_1, \ldots, x_d) \in \R^d$, gives us the first and second derivatives of the $j^{th}$ component of $q_i \hmu_i$.  Putting these components together gives
\begin{eqnarray}
\frac{d\,}{dt}\left(q_1 \hmu_1\right)  = - \frac{d\,}{dt}\left(q_2 \hmu_2\right) &=& \int_D x g r \, dx \label{E:qmud} \\
\frac{d^2\,}{dt^2}\left(q_1 \hmu_1\right) = - \frac{d^2\,}{dt^2}\left(q_2 \hmu_2\right) &=&
\int_D  \left(g^2 r N +  x g \nabla \cdot (g r N) \right) \, dx  \label{E:qmudd}
\end{eqnarray}
where, again, all the derivatives are evaluated at $t=0$.  Here we have used the fact that
$$ \nabla \cdot  \left( x_j g r N \right)
= g r N \cdot \grad x_j + x_j \nabla \cdot \left( g r N \right).$$
When $i \not= 1,2$, $U_i(t)= U_i$ for all $t$ so all derivatives of $q_i$ and $q_i \hmu_i$ vanish in this case.

Then combining (\ref{E:Id_general}), (\ref{E:qd}) and (\ref{E:qmud}) gives
\begin{eqnarray*}
-\left. \Id \right|_{t=0}
&=& \left(\int_D g r \, dx\right) \left(\log q_1 - \psi(\htheta_1)\right) + \htheta_1 \cdot \left(\int_D x g r \, dx\right) \\
&&  - \left(\int_D g r \, dx\right) \left(\log q_2 - \psi(\htheta_2)\right) - \htheta_2 \cdot \left(\int_D x g r \, dx\right) \\
&=& \int_D g r \left( \log q_1 - \psi(\htheta_1) + \htheta_1 \cdot x - \log q_2 + \psi(\htheta_2) - \htheta_2 \cdot x \right)\, dx \\
&=& \int_D r g (\lambda_1 - \lambda_2) \, dx
\end{eqnarray*}
as required, where $\lambda_i(x) = \log q_i + x \cdot \htheta_i - \psi(\htheta_i)$.

Substituting (\ref{E:qd})--(\ref{E:qmudd}) into (\ref{E:Idd_general}) gives
\begin{eqnarray*}
-\left. \Idd \right|_{t=0}
&=& \left(\int_D g \nabla \cdot \left( g r N \right) \, dx \right) \left(\log q_1 - \psi(\htheta_1) \right)
+ \frac{1}{q_1} \left(\int_D g r \, dx\right)^2  \\
&& - \left(\int_D g r \, dx\right) (\thetad_1 \cdot \hmu_1)
   + \thetad_1 \cdot \int_D x g r \, dx
   + \htheta_1 \cdot \int_D  \left( g^2 r N +  x g \nabla \cdot (g r N) \right) \, dx  \\
&& - \left(\int_D g \nabla \cdot \left( g r N \right) \, dx \right) \left(\log q_2 - \psi(\htheta_2) \right)
 + \frac{1}{q_2} \left(\int_D g r \, dx\right)^2  \\
&& + \left(\int_D g r \, dx\right) (\thetad_2 \cdot \hmu_2)
   - \thetad_2 \cdot \int_D x g r \, dx
   - \htheta_2 \cdot \int_D  \left( g^2 r N +  x g \nabla \cdot (g r N) \right) \, dx \\
&=& \int_D \left(\log q_1 - \psi(\htheta_1) + \htheta_1 \cdot x
    - \log q_2 + \psi(\htheta_2) - \htheta_2 \cdot x \right)
    g \nabla \cdot \left( g r N \right)  \, dx \\
&& + \left( \frac{1}{q_1} + \frac{1}{q_2} \right) \left(\int_D g r \, dx\right)^2  + \thetad_1 \cdot \int_D (x-\hmu_1) g r \, dx
  - \thetad_2 \cdot \int_D (x-\hmu_2) g r \, dx \\
&&   + \int_D  g^2 r N \cdot (\htheta_1 - \htheta_2) \, dx  \\
&=& \int_D (\lambda_1 - \lambda_2) g \nabla \cdot \left( g r N \right)  \, dx
   + \left( \frac{1}{q_1} + \frac{1}{q_2} \right) \left(\int_D g r \, dx\right)^2
   + \thetad_1 \cdot \delta_1 - \thetad_2 \cdot \delta_2 \\
&& + \int_D  g^2 r N \cdot (\htheta_1 - \htheta_2) \, dx.
\end{eqnarray*}
Now, $\hmu_i = \mu(\htheta_i)$ so $\mud_i = J_i \thetad_i$ where $J_i$ is the Jacobian matrix of $\mu$ evaluated at $\htheta_i$.  But by (\ref{E:gradpsi}) and (\ref{E:hesspsi}), $J_i=Q_i$ and $Q_i$ is symmetric and positive definite.  This implies that $Q_i$ is invertible so $\thetad_i = Q_i^{-1} \mud_i$.  Writing the left-hand side of (\ref{E:qmud}) as $\qd_1 \hmu_1 + q_1 \mud_1$, rearranging and using (\ref{E:qd}) gives
$$\mud_1 = \frac{1}{q_1} \left( \int_D x g r \, dx - \hmu_1 \int_D g r \, dx \right) = \frac{1}{q_1} \int_D (x - \hmu_1)g r \, dx = \frac{\delta_1}{q_1} $$
so $\thetad_1 = \frac{1}{q_1} Q_1^{-1} \delta_1$.  Similarly, $\thetad_2 = -\frac{1}{q_2} Q_2^{-1} \delta_2$, so the theorem follows.
\end{proof}

We cited the literature for a proof of Lemma \ref{L:cdot}.  However, this is a key lemma, so we also give a proof in this appendix, beginning with a general lemma.  See \cite{agricola} for an introduction to differential forms, Lie derivatives, Stokes' theorem, etc.

\begin{lemma}
\label{L:liederiv}
Let $\Omega \subseteq \R^d$ be a $d$-manifold with piecewise smooth boundary $\partial \Omega$.  Let $V$ be a vector field on $\R^d$ with flow $F_t: \R^d \to \R^d$ and let $\Omega_t \defeq F_t(\Omega)$.  If $c(t) \defeq \int_{\Omega_t} \omega$ for some differential $d$-form $\omega$ then
\begin{equation}\label{E:liederiv1}
\left. \frac{dc}{dt}\right|_{t=0} = \int_{\partial \Omega} i_V \omega
\end{equation}
and
\begin{equation}\label{E:liederiv2}
\left. \frac{d^2c}{dt^2}\right|_{t=0} = \int_{\partial \Omega} i_V \, d \, \, i_V \omega
\end{equation}
where $d$ is the exterior derivative and $i_V \alpha$ is the interior product of $V$ and any differential form $\alpha$.
\end{lemma}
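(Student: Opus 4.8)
The plan is to reduce both formulae to Cartan's magic formula together with Stokes' theorem, after a change of variables that moves all the $t$-dependence onto the integrand. First I would write $c(t) = \int_{\Omega_t}\omega = \int_{\Omega} F_t^*\omega$ by the change-of-variables formula for integrals of differential forms, so that the domain is now the fixed manifold $\Omega$. Since $\{F_t\}$ is the flow of $V$ it satisfies $F_{s+t} = F_s\circ F_t$, and the usual consequence $\frac{d}{dt}F_t^*\alpha = F_t^*(\mathcal{L}_V\alpha)$ then holds for every $t$ (not merely $t=0$) and every smooth form $\alpha$, where $\mathcal{L}_V$ denotes the Lie derivative. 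Differentiating under the integral sign gives $c'(t) = \int_\Omega F_t^*(\mathcal{L}_V\omega)$ and, applying the same identity once more, $c''(t) = \int_\Omega F_t^*(\mathcal{L}_V\mathcal{L}_V\omega)$; since $F_0$ is the identity, evaluating at $t=0$ leaves $c'(0) = \int_\Omega \mathcal{L}_V\omega$ and $c''(0) = \int_\Omega \mathcal{L}_V^2\omega$.

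The rest is algebra of differential forms. Because $\omega$ has top degree $d$ and $\Omega$ is $d$-dimensional, $d\omega$ is a $(d+1)$-form and hence vanishes identically, so Cartan's formula $\mathcal{L}_V\omega = i_V\,d\omega + d\,i_V\omega$ collapses to $\mathcal{L}_V\omega = d\,i_V\omega$. Stokes' theorem then gives $c'(0) = \int_\Omega d\,i_V\omega = \int_{\partial\Omega} i_V\omega$, which is (\ref{E:liederiv1}). For the second derivative I would compute $\mathcal{L}_V^2\omega = \mathcal{L}_V(d\,i_V\omega)$, use that $\mathcal{L}_V$ commutes with $d$ to get $\mathcal{L}_V(d\,i_V\omega) = d(\mathcal{L}_V i_V\omega)$, apply Cartan's formula again to the $(d-1)$-form $i_V\omega$, namely $\mathcal{L}_V i_V\omega = i_V\,d\,i_V\omega + d(i_V i_V\omega)$, and then drop the last term because $i_V i_V = 0$. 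This leaves $\mathcal{L}_V^2\omega = d\,i_V\,d\,i_V\omega$, and a second application of Stokes' theorem gives $c''(0) = \int_\Omega d\,i_V\,d\,i_V\omega = \int_{\partial\Omega} i_V\,d\,i_V\omega$, which is (\ref{E:liederiv2}).

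The point that needs the most care is the legitimacy of differentiating under the integral sign, i.e. showing that $c$ is genuinely twice differentiable with the derivatives claimed; this requires the integrands $F_t^*(\mathcal{L}_V\omega)$ and $F_t^*(\mathcal{L}_V^2\omega)$ to be dominated by a fixed integrable function for $t$ in a neighbourhood of $0$. In the application to Lemma \ref{L:cdot} this is immediate because $V = gN$ has compact support, so $F_t$ is the identity off a fixed compact set $K$ and only $\int_{\Omega\cap K}\omega$ varies with $t$; for the general statement one should add the hypothesis that $\Omega$ is compact or that $V$ has compact support (or assume the relevant uniform integrability). A minor additional point is that Stokes' theorem is being applied on a manifold whose boundary is only piecewise smooth, but this is standard since the non-smooth locus of $\partial\Omega$ has dimension at most $d-2$ and so contributes nothing to the boundary integrals.
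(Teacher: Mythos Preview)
Your proposal is correct and follows essentially the same route as the paper: pull back via $F_t$ to a fixed domain, identify the $t$-derivative with the Lie derivative, then use Cartan's formula together with $d\omega=0$ and Stokes' theorem. The only cosmetic difference is in the second derivative: the paper first rewrites $c'(t)$ as the boundary integral $\int_{\partial\Omega} i_V(F_t^*\omega)$ and differentiates that (obtaining $\int_{\partial\Omega} i_V\mathcal{L}_V\omega$), whereas you stay in the bulk, compute $c''(0)=\int_\Omega \mathcal{L}_V^2\omega$, and simplify $\mathcal{L}_V^2\omega$ to $d\,i_V\,d\,i_V\omega$ using $[\mathcal{L}_V,d]=0$ and $i_Vi_V=0$ before a single application of Stokes; the two computations are clearly equivalent.
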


\begin{proof}  We first note that
$$ \left. \frac{dc}{dt}\right|_{t=0} = \left. \frac{d\,}{dt}\right|_{t=0} \int_{\Omega_t} \omega
= \left. \frac{d\,}{dt}\right|_{t=0} \int_{\Omega} F_t^*\omega = \int_{\Omega} \left. \frac{d\,}{dt}\right|_{t=0} F_t^*\omega
= \int_{\Omega} \mathcal{L}_V \omega$$
where $\mathcal{L}_V$ is the Lie derivative of $V$.  So using Cartan's formula $\mathcal{L}_V = d \,\, i_V + i_V d$, the fact that $d\omega = 0$ (since $\omega$ is top-dimensional) and Stokes' theorem we have
$$ \left. \frac{dc}{dt}\right|_{t=0} = \int_{\Omega} (d \,\,i_V + i_V d) \omega = \int_{\Omega} d \,\,i_V \omega
= \int_{\partial \Omega} i_V \omega.$$
Similar reasoning gives
$$ \left. \frac{dc}{dt}\right|_{t} = \left. \frac{d\,}{ds}\right|_{s=0} c(s+t)
= \left. \frac{d\,}{ds}\right|_{s=0} \int_{\Omega} F_{s+t}^* \omega
= \int_{\Omega} \left. \frac{d\,}{ds}\right|_{s=0} F_s^*(F_t^*\omega)
= \int_{\Omega} \mathcal{L}_V (F_t^*\omega) = \int_{\partial \Omega} i_V (F_t^*\omega).$$
Now, if $v_1, \ldots, v_{d-1}$ are any vector fields on $\partial \Omega$ then
\begin{eqnarray*}
\left. \frac{d\,}{dt}\right|_{t=0} (i_V (F_t^*\omega))(v_1, \ldots, v_{d-1})
&=&  \left. \frac{d\,}{dt}\right|_{t=0} (F_t^*\omega)(V, v_1, \ldots, v_{d-1})  \\
&=&  (\mathcal{L}_V\omega)(V, v_1, \ldots, v_{d-1}) \\
&=&  (i_V \mathcal{L}_V\omega)(v_1, \ldots, v_{d-1})
\end{eqnarray*}
so
$$ \left. \frac{d\,}{dt}\right|_{t=0} i_V (F_t^*\omega)  = i_V \mathcal{L}_V \omega .$$
Therefore
$$\left. \frac{d^2c}{dt^2}\right|_{t=0}
= \left. \frac{d\,}{dt}\right|_{t=0} \int_{\partial \Omega} i_V (F_t^*\omega)
= \int_{\partial \Omega} i_V \mathcal{L}_V \omega
= \int_{\partial \Omega} i_V (d \,\, i_V + i_V d) \omega
= \int_{\partial \Omega} i_V \, d \, \, i_V \omega.$$
\end{proof}

We can now give our proof of Lemma \ref{L:cdot}.

\begin{proof} [Alternative proof of Lemma \ref{L:cdot}]
Let $x_1, \ldots, x_d$ be the standard co-ordinates on $\R^d$ so that $dx_1 \wedge \ldots \wedge dx_d$ is the $\R^d$ volume form.  We will apply Lemma \ref{L:liederiv} with $\Omega = U_i$ for $i$ either $1$ or $2$, $V=gN$ and $\omega = \rho dx_1 \wedge \ldots \wedge dx_d$.  Then (\ref{E:liederiv1}) becomes
$$ \left. \frac{dc}{dt}\right|_{t=0} = \int_{\partial \Omega} i_V \omega
= \int_{\partial U_i} i_{gN}(\rho dx_1 \wedge \ldots \wedge dx_d)
= \int_{\partial U_i} g \rho (i_N dx_1 \wedge \ldots \wedge dx_d)
= \pm \int_D g \rho dx$$
since $\mbox{Supp}(g) \cap \partial U_i \subseteq D$ and $i_N dx_1 \wedge \ldots \wedge dx_d$ is the volume form on $\partial U_1$ and minus the volume form on $\partial U_2$ (recall that $N$ is the unit normal vector field on $D$ which points out of $U_1$ and into $U_2$).

If $N = (N_1, \ldots, N_d)$ then
$$ i_V \omega =  \sum_{i=1}^d (-1)^{i+1} g \rho N_i dx_1 \wedge \ldots \wedge \widehat{dx_i} \wedge \ldots \wedge dx_d $$
where the hat indicates that the term is excluded.  Therefore
\begin{eqnarray*}
d \, \,i_V \omega
&=& \sum_{i,j=1}^d (-1)^{i+1} \frac{\partial \,}{\partial x_j} \left( g \rho N_i \right)
dx_j \wedge dx_1 \wedge \ldots \wedge \widehat{dx_i} \wedge \ldots \wedge dx_d \\
&=&  \sum_{i=1}^d (-1)^{i+1} \frac{\partial \,}{\partial x_i} \left( g \rho N_i \right)
dx_i \wedge dx_1 \wedge \ldots \wedge \widehat{dx_i} \wedge \ldots \wedge dx_d \\
&=&  \sum_{i=1}^d \frac{\partial \,}{\partial x_i} \left( g \rho N_i \right) dx_1 \wedge \ldots \wedge dx_d \\
&=&  \left(\nabla \cdot \left( g \rho N \right)\right) dx_1 \wedge \ldots \wedge dx_d
\end{eqnarray*}
so
\begin{eqnarray*}
i_V \, d \, \, i_V \omega
&=& \left(g \nabla \cdot \left( g \rho N \right)\right) \left(i_N dx_1 \wedge \ldots \wedge dx_d \right).
\end{eqnarray*}
Substituting this into (\ref{E:liederiv2}) then completes the proof of the lemma since $g$ is $0$ on $\partial U_i$ except perhaps in $D$ and $i_N dx_1 \wedge \ldots \wedge dx_d$ is the volume form on $\partial U_1$ and minus the volume form on $\partial U_2$.
\end{proof}

\end{document}